\newcommand{\bd}{\begin{definition}}                
\newcommand{\ed}{\end{definition}}                  
\newcommand{\bc}{\begin{corollary}}                 
\newcommand{\ec}{\end{corollary}}                   
\newcommand{\bl}{\begin{lemma}}                     
\newcommand{\el}{\end{lemma}}                       
\newcommand{\bp}{\begin{proposition}}            
\newcommand{\ep}{\end{proposition}}                
\newcommand{\bere}{\begin{remark}}                  
\newcommand{\ere}{\end{remark}}                     
\newcommand{\bt}{\begin{theorem}}
\newcommand{\et}{\end{theorem}}
\newcommand{\be}{\begin{equation}}
\newcommand{\ee}{\end{equation}}
\newcommand{\bit}{\begin{itemize}}
\newcommand{\eit}{\end{itemize}}
\newtheorem{theorem}{Theorem}[section]
\newtheorem{corollary}[theorem]{Corollary}
\newtheorem{lemma}[theorem]{Lemma}
\newtheorem{proposition}[theorem]{Proposition}
\theoremstyle{definition}
\newtheorem{definition}[theorem]{Definition}
\theoremstyle{remark}
\newtheorem{remark}[theorem]{Remark}
\newtheorem{example}[theorem]{Example}
\begin{document}
%


\title{Further observations on the definition of global hyperbolicity under low regularity}

\author{E. Minguzzi \footnote{Dipartimento di Matematica Applicata, Universit\`a degli Studi di Firenze,  Via
S. Marta 3,  I-50139 Firenze, Italy. E-mail:
ettore.minguzzi@unifi.it}}


\date{February 2023}
\maketitle

\begin{abstract}
\noindent
The definitions of global hyperbolicity for closed cone structures and topological preordered spaces are known to coincide.
In this work we clarify the connection with definitions of global hyperbolicity  proposed in recent literature on Lorentzian length spaces and Lorentzian optimal transport, suggesting possible corrections for the  terminology adopted in these works.
It is found that in Kunzinger-S\"amann's Lorentzian length spaces the definition of global hyperbolicity coincides with that valid for closed cone structures and, more generally, for topological preordered spaces: the causal relation is a closed order and the causally convex hull operation preserves compactness. In particular, it is independent of the metric, chronological relation or Lorentzian distance.

\end{abstract}

\section{Introduction}
Among the many causality properties that can be imposed on a Lorentzian spacetimes, global hyperbolicity is certainly one of the most useful. The PDE evolution of Cauchy data for the  Einstein's equations naturally lead to, potentially extendible, globally hyperbolic spacetimes. The belief that under physically reasonable conditions the globally hyperbolic spacetimes so obtained are inextendibile is known as the strong cosmic censorship.

Global hyperbolicity is also the strongest property in the causal ladder of spacetimes, and the evolution and refinement of its definition has reflected the progress of mathematical relativity in the last decades.

Recently, studies in low regularity Lorentzian geometry, by means of metric geometry, cone structures, length spaces, Lorentzian optimal trasport,
 have led to new investigations and adaptations of this property to more general frameworks.

Some years ago we started introducing elements of the theory of topological ordered spaces as developed by Nachbin \cite{nachbin65} in the study of the spacetime causal structure \cite{minguzzi09c}. The idea was to regard the spacetime structure as  a topological ordered space $(X,\mathscr{T}, J)$ endowed with a measure $\mu$. As argued  in \cite{minguzzi13e}, this type of framework could be sufficiently general to correctly describe a quantum spacetime theory, the manifold smoothness being expected to be lost at small length scales/high energies.

Nachbin theory was not sufficiently general though, as it was particularly lacking in connection with non-compact manifolds. In order to extend its range of application we obtained some results that could be applied to locally compact $\sigma$-compact spaces and hence to manifolds \cite{minguzzi11f,
minguzzi12d}. In this connection, we proposed a definition of global hyperbolicity that applies to closed ordered spaces \cite{minguzzi12d}.

It is the purpose of this work to show that this definition passes the test of time, as it is consistent with all the definitions of global hyperbolicity that have subsequently been proposed. This consistency was proved for closed cone structures in \cite{minguzzi17,minguzzi19c}, but recent work on Lorentzian length spaces and Lorentzian optimal transport, adopting new terminology, has brought us to reconsider this problem for these type of structures.

We recall that a {\em topological preordered space} is a triple $(X,\mathscr{T},J)$ where $(X,\mathscr{T})$ is a topological space and $J$ is a reflexive and transitive relation ({\em preorder}). It is called an {\em order} if it is antisymmetric, in which case the triple is a {\em topological ordered space}. Several consistency conditions can be imposed between topology and order, in fact, as Nachbin showed \cite{nachbin65}, there is a beautiful topological theory for these spaces that is analogous to the usual topology. One of the most important conditions is that the preorder $J$, regarded as a subset of $X\times X$ be closed in the product topology $\mathscr{T}\times \mathscr{T}$, in which case we speak of {\em closed preordered space} (in some papers in topology this property is referred as  {\em continuity} of $J$). A weaker property is that of {\em semi-closedness} (or semi-continuity) namely $J$ is such that the sets $J^+(p):=\{q: (p,q)\in J\}$ and $J^-(p)=\{q: (q,p)\in J\}$ are closed for every $p$.

Observe that the antisymmetry of $J$ reads $\Delta=J\cap J^{-1}$ where $J^{-1}=\{(x,y): (y,x)\in J\}$ and $\Delta\subset X\times X$ is the diagonal. Thus for a closed order the diagonal $\Delta$ is closed, i.e.\   the  topology $\mathscr{T}$ is Hausdorff.

The definition proposed in  \cite{minguzzi12d} was
\begin{definition} \label{viq}
A topological preordered space  $(X,\mathscr{T},J)$ is {\em globally hyperbolic} if
\begin{itemize}
\item[($\star$)] $J$ is a closed order and for every compact set $K$ the set $J^+(K)\cap J^-(K)$ is compact.
\end{itemize}
\end{definition}

Property ($\star$) consists of two conditions. The former might also be called {\em causal simplicity} while the latter reads: the operation of taking the causally convex hull preserves compactness. The weaker condition of {\em causality} is the request: $J$ is antisymmetric (i.e.\ an order).
Finally, it is possible to introduce an intermediate level between causal simplicity and causality, namely {\em stable causality} (which we shall not use): the smallest closed preorder containing $J$ is antisymmetric. This proves that portions of the causal ladder for spacetimes \cite{minguzzi18b} pass to the topological preordered space framework.

The property ($\star$) is important as for locally compact sigma-compact topologies it implies another very important property know as quasi-uniformizability, which essentially establishes the representability of the topological ordered space by continuous isotone functions and the fact that the space  can be Nachbin compactified \cite{minguzzi12d} (we shall not expand on these properties as they will not be used in what follows).


\section{Global hyperbolicity on smooth manifolds}

Let us briefly recall the improvements on the definition of global hyperbolicity that took place along the years. We start with the regular setting, meaning with this $C^2$ (or $C^{1,1}$) metrics on smooth manifolds.

The traditional definition, as can be found in the oldest textbooks \cite{hawking73,beem96}, is {\em strong causality and compactness of the causal diamonds: $\forall p,q\in M$, $J^+(p)\cap J^{-}(q)$}. Bernal and S\'anchez proved that {\em  strong causality} could be weakened to {\em causality} \cite{bernal06b}, while in \cite{minguzzi19c} we proved that for physically reasonable spacetimes, i.e.\ with dimension larger than three which are non-compact or non-totally vicious, the condition of causality could be dropped altogether resulting in just {\em compactness of the causal diamonds}.

Other equivalent definitions were also obtained, for instance {\em non-total imprisonment and the causal diamonds are relatively compact} \cite{minguzzi08e} (a spacetime is non-total imprisoning if no inextendible causal curve is imprisoned in a compact set). This definition shows clearly that shrinking the cones does not spoil global hyperbolicity (as the family of causal curves get smaller, as do the causal diamonds)
and is particularly convenient for proving its stability under $C^0$ perturbations of the cones \cite{minguzzi11e,samann16,minguzzi17}. It also sets a balance between compact sets (and hence open sets) and causality, as the more the compact sets the harder for causal curves to be non-imprisoned, but the easier for causal diamonds to be relatively compact.

The first work in a low regularity setting was due to Fathi and Siconolfi \cite{fathi12,fathi15} who studied $C^0$ cone distributions. Their definition of global hyperbolicity was essentially the traditional one, but for strong causality that was strengthened to stable causality. Chru\'sciel and Grant \cite{chrusciel12} studied systematically the $C^0$ Lorentzian metric theory, pointing out that the equalities $\overline{ J^{\pm}(p)}=\overline{ I^\pm(p)}$ (no causal bubbles), $I\circ J\cup J\circ I\subset I$ (push up), do not hold at this level of regularity. The two pathologies were in fact one and the same as was later show in \cite[Thm.\ 2.8]{minguzzi17}, see also \cite[Thm.\ 2.12]{grant20}.
The $C^0$ Lorentzian geometry  approach was also developed  by Sbierski \cite{sbierski15} in his study on the $C^0$ inextendibility of Schwarzschild spacetime. S\"amann studied specifically global hyperbolicity and its many equivalent definitions in the same $C^0$ Lorentzian framework \cite{samann16}.

A more general point of view was taken by Bernard and Suhr, who studied closed cone structures \cite{bernard18}, thus reconsidering Fathi and Siconolfi's cone distribution approach. Subsequently, we explored quite in deep causality in closed cone structures and non-regular Lorentz-Finsler spaces \cite{minguzzi17}.

We recall that a {\em cone structure} on a smooth manifold is a multivalued map $x \mapsto C_x$, where $C_x \subset
T_xM\backslash 0$, is a closed sharp convex non-empty cone. It is a {\em closed cone structure} if $C=\cup_x C_x$ is a closed subbundle of the slit tangent bundle $TM\backslash 0$ (this is the terminology in \cite{minguzzi17}, Bernard and Suhr would call it {\em regular} closed cone structure). This is essentially an upper semi-continuity condition on the cone distribution \cite[Prop.\ 2.3]{minguzzi17}. It turns out that most of causality theory and the very causal ladder of spacetimes makes sense for closed cone structures \cite[Thm.\ 2.47]{minguzzi17}. However, in order to make sense of the chronological relation and some Lipschitzness condition on achronal hypersurfaces, one needs to work with a slightly more specialized object, namely a {\em proper cone structure} which is a closed cone structure which is proper, i.e.\ such $\textrm{Int} C\subset TM\backslash 0$ has non-empty fiber at each point of $M$. The $C^0$ metric Lorentzian framework is contained in the proper cone structure theory and hence in the theory of closed cone structures.

For a proper cone structure all the  definitions of global hyperbolicity  traditionally developed for smooth spacetimes have a straightforward analog and  remain equivalent among each other \cite{minguzzi17}. The relation of global hyperbolicity with causal simplicity and other causality properties also does not change. We might say that there are no surprises for proper cone structures, and no need to adjust the definitions \cite{minguzzi17,minguzzi19c}.

For closed cone structures one has to be  careful. The generalization of the traditional definition which makes use of causal diamonds does not work in the sense that it is not equivalent to other desirable properties, such as the existence of a Cauchy hypersurface \cite[Example 2.6]{minguzzi17}.
Nevertheless, the following properties are equivalent  \cite{bernard18} \cite[Thm.\ 2.39, 2.45]{minguzzi17}
\begin{itemize}
\item[($\star$)] $J$ is a closed order and for every compact set $K$, $J^+(K)\cap J^-(K)$ is compact.
\item[($*$)] Causality and for every compact sets $K_1$ and $K_2$, the `causal emerald' $J^+(K_1)\cap J^-(K_2)$ is compact.
\item[(2)] Non-imprisonment and the causally convex hulls of relatively compact sets are relatively compact.
\item[(3)] Existence of a Cauchy hypersurface.
\item[(4)] Existence of a Cauchy time function.
\end{itemize}
and hence any of them can provide the correct definition of global hyperbolicity in this framework. Observe that ($\star$) coincides with the definition
 \ref{viq} we gave in the context of topological ordered spaces, while property ($*$) was introduced by Bernard and Suhr in \cite{bernard18}.
In the recent work \cite{minguzzi19c} we established that for closed cone structures ($\star$) and ($*$) can be improved as follows (in the smooth setting this result had been already obtained in \cite{minguzzi11e})
\begin{itemize}
\item[(1)] $J$ is an order (causality) and for every compact set $K$, $J^+(K)\cap J^-(K)$ is compact.
\end{itemize}


\section{Global hyperbolicity for topological ordered \\ spaces}

The  most general setting for studying causality is that of topological preordered spaces.  Any framework for spacetime, including those not using a notion of smooth manifold, such as that of Lorentzian length spaces, can be seen as an instance of this general framework (sometimes with some caveats, see below).

In the introduction we provided definitions for casuality, stable causality, causal simplicity and global hyperbolicity, i.e.\ property ($\star$).

%
%
%
%
%
%

We  observe that in ($\star$) the replacement of the compact set $K$ with two sets $K_1$, $K_2$, does not change the property as we have

\begin{proposition} \label{biw}
For a closed preordered space the properties
\begin{itemize}
\item[(i)] for every compact subset $K$, $J^+(K)\cap J^-(K)$ is compact,
\item[(ii)] for any two compact subsets $K_1$, $K_2$,   $J^+(K_1)\cap J^-(K_2)$ is compact,
\end{itemize}
are equivalent.
\end{proposition}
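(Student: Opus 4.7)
The plan is as follows. The implication (ii) $\Rightarrow$ (i) is immediate by setting $K_1 = K_2 = K$, so all the work lies in showing (i) $\Rightarrow$ (ii).

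Given compact subsets $K_1, K_2 \subset X$, I would introduce the compact set $K := K_1 \cup K_2$ and invoke (i) to conclude that $J^+(K) \cap J^-(K)$ is compact. Since $K_1, K_2 \subset K$, the monotonicity of $J^\pm$ gives the inclusion
\[
J^+(K_1) \cap J^-(K_2) \subset J^+(K) \cap J^-(K),
\]
so it suffices to show that $J^+(K_1) \cap J^-(K_2)$ is closed, for a closed subset of a compact set is compact.

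For the closedness, I would use the hypothesis that $(X,\mathscr{T},J)$ is a closed preordered space, i.e.\ $J$ is closed in $X \times X$. Writing $J^+(K_1) = \pi_2(J \cap (K_1 \times X))$, where $\pi_2$ denotes the projection onto the second factor, the set $J \cap (K_1 \times X)$ is closed in $X \times X$, and the projection along the compact factor $K_1$ is a closed map; hence $J^+(K_1)$ is closed in $X$. The symmetric argument, using that projection along the compact factor $K_2$ is closed, shows that $J^-(K_2)$ is closed. The intersection $J^+(K_1) \cap J^-(K_2)$ is therefore closed, which together with the inclusion above concludes the proof.

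The argument is essentially routine; the only point requiring a bit of care is the claim that for a closed preorder $J$ and any compact set $K$ the cones $J^\pm(K)$ are closed. This is the standard closed-projection-with-compact-fibre lemma, and I would state it explicitly so that the reader sees exactly where the closedness of $J$ (as opposed to mere semi-closedness) enters.
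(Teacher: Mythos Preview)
Your proof is correct and follows exactly the same strategy as the paper: for (i) $\Rightarrow$ (ii), set $K = K_1 \cup K_2$, observe the inclusion $J^+(K_1)\cap J^-(K_2)\subset J^+(K)\cap J^-(K)$, and conclude by showing $J^+(K_1)\cap J^-(K_2)$ is closed. The only cosmetic difference is that the paper simply cites the closedness of $J^\pm(K)$ for compact $K$ in a closed preordered space as a known fact (Nachbin; Minguzzi), whereas you supply the standard closed-projection-along-a-compact-factor argument explicitly.
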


\begin{proof}
One direction is obvious setting $K=K_1=K_2$. For the other direction, it is well known that in a closed preordered space if $K$ is compact, $J^{\pm}(K)$, is closed \cite[p.\ 44]{nachbin65} \cite[Prop.\ 2.2]{minguzzi11f}. Thus $J^+(K_1)\cap J^-(K_2)$ is a closed subset of the compact set $J^+(K)\cap J^-(K)$, where $K=K_1\cup K_2$, hence compact.
\end{proof}

\begin{proposition}\label{bir}
Let $(M,\mathscr{T}, J)$ be a topological preordered space. Assume that the topology is Hausdorff and ``first countable or locally compact''.  The property
\begin{itemize}
\item[($\ddagger$)] for every compact sets $K_1,K_2$, $J^+(K_1)\cap J^-(K_2)$ is compact,
\end{itemize}
implies
\begin{itemize}
\item[($\sharp$)] $J$ is closed in the product topology.
\end{itemize}
\end{proposition}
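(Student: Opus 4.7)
The plan is to derive closedness of $J$ from ($\ddagger$) in two stages: first establish a weak form of \emph{semi-closedness} of the one-point upper/lower sets, and then use it to upgrade to full closedness.

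\textbf{Step 1 (semi-closedness).} I would first show that ($\ddagger$) forces $J^{+}(p)$ and $J^{-}(p)$ to be closed for each $p\in M$. Take a convergent net $y_{\alpha}\to y$ with $y_{\alpha}\in J^{+}(p)$. In the locally compact case pick a precompact open neighbourhood $V$ of $y$, so eventually $y_{\alpha}\in \overline{V}$; in the first countable case pass to a sequence and use the compact set $\{y_{n}\}_{n}\cup\{y\}$. Either way, setting $K_{2}$ equal to this compact set, reflexivity gives $y_{\alpha}\in J^{+}(\{p\})\cap J^{-}(K_{2})$, which is compact by ($\ddagger$), hence closed by Hausdorffness of $\mathscr{T}$. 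The limit $y$ therefore lies in it, so $y\in J^{+}(p)$; the argument for $J^{-}(p)$ is symmetric.

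\textbf{Step 2 (upgrade to $J$ closed).} Let $(x_{\alpha},y_{\alpha})\in J$ with $(x_{\alpha},y_{\alpha})\to (x,y)$. My goal is to produce a net (or sequence) $z_{\beta}\to x$ with $(z_{\beta},y)\in J$, since closedness of $J^{-}(y)$ from Step 1 then yields $(x,y)\in J$. Choose precompact neighbourhoods $U$ of $x$ and $V$ of $y$ (locally compact case), or in the first countable case take tails $K_{1}^{N}=\{x_{k}\}_{k\geq N}\cup\{x\}$ and $K_{2}=\{y_{n}\}_{n}\cup\{y\}$. Either way, eventually $x_{\alpha}\in \overline{U}$ and $y_{\alpha}\in \overline{V}$, so $y_{\alpha}\in J^{+}(\overline{U})\cap J^{-}(\overline{V})$; this emerald is compact by ($\ddagger$), hence closed, so $y$ belongs to it. Reading off $y\in J^{+}(\overline{U})$ produces a witness $z_{U}\in \overline{U}$ with $(z_{U},y)\in J$.

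Now shrink $U$ over a neighbourhood base of $x$ directed by reverse inclusion (respectively let $N\to\infty$ in the first countable case, where the witness $z_{N}\in K_{1}^{N}$ is either $x$ itself or some $x_{k_{N}}$ with $k_{N}\geq N$, so $z_{N}\to x$ because $x_{k}\to x$). The resulting net/sequence lies in $J^{-}(y)$ and converges to $x$; by Step 1 the set $J^{-}(y)$ is closed, so $x\in J^{-}(y)$, i.e.\ $(x,y)\in J$.

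\textbf{Main obstacle.} The semi-closedness step is essentially automatic once one notices that ($\ddagger$) pins tails of convergent nets inside emerald-type compacts, so the delicate point is the witness-chasing in Step 2: one must ensure that the witness $z_{U}$ produced from $y\in J^{+}(\overline{U})$ actually converges to $x$ as $U$ shrinks. The locally compact case handles this cleanly since $z_{U}\in\overline{U}$, while the first countable case needs the minor observation that $k_{N}\geq N\Rightarrow k_{N}\to\infty$. The alternative hypothesis ``first countable or locally compact'' is precisely what is required to execute this two-case argument uniformly.
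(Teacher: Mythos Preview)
Your proof is correct and follows the same two-step architecture as the paper: first establish semi-closedness of $J^{\pm}(p)$, then upgrade to full closedness of $J$. The difference lies in how Step~2 is executed. The paper applies the finite intersection property to the nested family of compact emeralds $J^{+}(K_{n}^{p})\cap J^{-}(K_{n}^{q})$ (with $K_{n}^{p}=\{p,p_{n},p_{n+1},\dots\}$, $K_{n}^{q}=\{q,q_{n},q_{n+1},\dots\}$ in the first countable case, compact neighbourhoods in the locally compact case) to extract a single intermediate point $r$ lying in all of them, and then invokes semi-closedness twice to conclude $p\le r\le q$. You instead fix the limit $y$ first, show $y\in J^{+}(\overline{U})$ for each neighbourhood $U$ of $x$, extract a moving witness $z_{U}\in\overline{U}\cap J^{-}(y)$, and apply semi-closedness once to $J^{-}(y)$. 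Your variant avoids the FIP at the price of an asymmetric treatment of the two endpoints; the paper's version is symmetric but needs the nested-intersection step. Both are equally valid and of comparable difficulty.
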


We recall that under the Hausdorff condition if every point admits a compact neighborhood then every point admits a basis of compact neighborhoods. Thus there is not ambiguity in what we mean by local compactness. The proof is similar to \cite[Thm.\ 2.38]{minguzzi17}.

\begin{proof}
Let us give the proof in the `first countable' case.
First we prove that $J$ is semi-closed. Let $q\in \overline{J^+(p)}$ then there is a sequence $q_k\to q$, $p\le q_k$. The set $K=\{q,q_1, q_2, \cdots\}$ is compact, thus as $\{p\}$ is compact, $B=J^+(p)\cap J^-(K)$ is compact hence closed. But $q_k\in B$, thus $q\in B$ which implies $p\le q$. By the arbitrariness of $q$, $\overline{J^+(p)}=J^+(p)$. The fact that  $J^-(p)$ is closed is proved analogously.

Now, let  $(p,q)\in \overline{J}$, then we can find $(p_k,q_k)\in J$, $(p_k,q_k) \to (p,q)$. Let us consider the compact sets $K^p_n=\{p,p_n, p_{n+1}, \cdots\}$, $K^q_n=\{q,q_n, q_{n+1}, \cdots\}$, then $J^+(K^p_n)\cap J^-(K^q_n)$ is non-empty (as it contains $p_k$ and $q_k$ for every $k\ge n$) and compact. By the finite intersection property, there is $r\in M$ such that $r\in J^+(K^p_n)\cap J^-(K^q_n)$ for every $n$. By the semi-closure of $J$, $p \le r \le q$, thus $p\le q$.

In the locally compact case the proof is analogous, just let $K$ be a compact neighborhood in the first part, and let $K^p_\alpha, K^q_\beta$ be generic  compact neighborhoods of $p$ and $q$ in the second part.
\end{proof}

We are ready to prove that the equivalence between ($\star$) and ($*$), already proved in the context of closed ordered spaces \cite[Thm.\ 2.39]{minguzzi17}, actually holds in general for topological preordered spaces (in that result the Hausdorff property for the topology was contained in the manifold condition on  $M$).

\begin{theorem} \label{ppr}
Let $(M,\mathscr{T}, J)$ be a topological preordered space such that the topology is ``first countable or locally compact''.   The following properties are equivalent
\begin{itemize}
\item[($\star$)] $J$ is a closed order and for every compact set $K$, $J^+(K)\cap J^-(K)$ is compact,
\item[($\tilde{*}$)]  $J$ is antisymmetric, the topology is Hausdorff and for every compact sets $K_1,K_2$, $J^+(K_1)\cap J^-(K_2)$ is compact.
\end{itemize}
\end{theorem}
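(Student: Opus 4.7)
The plan is to obtain each direction by combining the two previously established propositions with some almost-trivial bookkeeping, so no substantial new obstacle is expected.

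For the implication ($\star$) $\Rightarrow$ ($\tilde*$), I would proceed as follows. Since $J$ is by assumption a closed order, antisymmetry is immediate. The Hausdorff property comes for free from the observation already recorded in the introduction: antisymmetry means $\Delta=J\cap J^{-1}$, and since $J$ (hence $J^{-1}$) is closed in the product topology, $\Delta$ is closed, so $\mathscr T$ is Hausdorff. Finally, the compactness of $J^+(K_1)\cap J^-(K_2)$ for arbitrary compact $K_1,K_2$ is exactly the content of Proposition \ref{biw} applied to the closed preordered space $(M,\mathscr T,J)$, whose single-set hypothesis is precisely the second half of ($\star$). This direction makes no use of the ``first countable or locally compact'' assumption.

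For the converse ($\tilde*$) $\Rightarrow$ ($\star$), observe first that property ($\tilde*$) contains precisely the hypotheses of Proposition \ref{bir}: Hausdorffness, the topological regularity assumption, and the two-compact-sets version ($\ddagger$). The conclusion of that proposition is that $J$ is closed in the product topology, i.e.\ property ($\sharp$). Combined with the antisymmetry hypothesis and reflexivity/transitivity (which come from $J$ being a preorder by our standing assumption that $(M,\mathscr T,J)$ is a topological preordered space), this gives that $J$ is a closed order, which is the first half of ($\star$). The second half, namely compactness of $J^+(K)\cap J^-(K)$ for compact $K$, follows immediately by specializing the two-set compactness to $K_1=K_2=K$.

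The only potential pitfall is to make sure the hypotheses of Proposition \ref{bir} are genuinely subsumed by ($\tilde*$) plus the standing assumption of the theorem, but this is immediate: Hausdorff is explicit in ($\tilde*$), ``first countable or locally compact'' is explicit in the theorem, and the compactness of $J^+(K_1)\cap J^-(K_2)$ is precisely ($\ddagger$). Since the two-set statement in Proposition \ref{biw} requires $J$ to be a closed preorder, one must take care in the forward direction to invoke it only after noting that ($\star$) already provides closedness of $J$, which is indeed the case. No step requires new machinery beyond Propositions \ref{biw} and \ref{bir}.
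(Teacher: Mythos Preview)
Your proof is correct and follows essentially the same approach as the paper: the forward direction is Proposition~\ref{biw} together with the observation that a closed order forces the topology to be Hausdorff, and the reverse direction is Proposition~\ref{bir} combined with the trivial specialization $K_1=K_2=K$. Your additional bookkeeping (checking that the hypotheses of the two propositions are indeed available at each step) is accurate and makes explicit what the paper leaves implicit.
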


\begin{proof}
The direction ($\star$) $\Rightarrow$ ($\tilde{*}$) is Prop.\ \ref{biw}, noting that, as previously mentioned, a closed ordered space has Hausdorff topology. The direction ($\tilde{*}$) $\Rightarrow$ ($\star$) is Prop.\ \ref{bir}.
\end{proof}

\begin{example}
Let $(M,\mathscr{T}, J)$ be a topological ordered space. Consider the properties
\begin{itemize}
\item[($\dagger$)] for every compact set $K$, $J^+(K)\cap J^-(K)$ is closed,
\item[($\sharp$)] $J$ is closed in the product topology.
\end{itemize}
Is it true that the former implies the latter? The other direction is a well known as the closedness of $J$ implies the closedness
of $J^\pm(K)$, see \cite[p. 44]{nachbin65}.

We know that the result holds true in the smooth Lorentzian setting \cite[Lemma 2.1]{minguzzi11e} and, more generally, for closed cone structures \cite{minguzzi19c}. However, it does not pass to topological ordered spaces, not even under good properties for the topology (e.g.\ metrizable).

We can give the following minimal couterexample. Let $M=\{p,q,q_1,q_2, \cdots\}$, $M \subset \mathbb{R}$, where $q=0$, $q_n=1/n$, $p=-1$, and where the topology is the induced topology. Moreover, define $\le$ as follows: set $p\le q_k\nleq p $ for every $k$, $p, q_k\nleq q\nleq p, q_k$, for every $k$, and $q_j\nleq q_k$ for $j\ne k$. Observe that $q_k\to q$.
In this example every compact set is such that $J^+(K)\cap J^-(K)=K$ which is compact hence closed. Clearly, $J$ is not closed (not even semiclosed).
\end{example}

As a preliminary result for the next section, it will be useful to recall that non-total imprisonment in the smooth setting is  defined as follows: there are no inextendible causal curves imprisoned in a compact set \cite{minguzzi18b}. The definition remains valid for closed cone structures \cite[Def.\ 2.10]{minguzzi17}. For them it is also true that a continuous causal curve is inextendible if and only if it has infinite $h$-arc length where $h$ is a complete Riemannian metric \cite[Cor.\ 2.1]{minguzzi17}. Finally, for this structure an easy application of the limit curve Lemma \cite[Lemma 2.1]{minguzzi17}  gives a result which is also valid in the smooth setting and in  the $C^0$ Lorentzian theory \cite{samann16}\cite[Thm.\ 4.39]{minguzzi18b}

\begin{proposition} \label{biq}
A closed cone structure $(M,C)$ is non-total imprisoning if and only if for an auxiliary (and hence for every) Riemannian metric $h$ on $M$ and every compact set $K$ there is a constant $c(K)>0$ such that all the continuous causal curves with image in $K$ have $h$-arc length smaller than $c(K)$.
\end{proposition}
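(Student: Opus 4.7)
The plan is to prove both directions by contradiction, relying on the two results quoted just before the statement: the equivalence \cite[Cor.\ 2.1]{minguzzi17} between inextendibility of a continuous causal curve and its having infinite $h$-arc length (for a complete auxiliary Riemannian metric $h$), together with the limit curve Lemma \cite[Lemma 2.1]{minguzzi17}. The independence from the chosen auxiliary Riemannian metric will follow from the standard fact that on the compact set $K$ any two Riemannian metrics are bi-Lipschitz equivalent, so the constant $c(K)$ just changes by a bounded factor; it is therefore enough to fix one complete $h$.

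For the ``only if'' direction, suppose non-total imprisonment but that the length bound fails for some compact set $K$. Then there exists a sequence of continuous causal curves $\gamma_n$ whose image lies in $K$ and whose $h$-arc length $L_n$ satisfies $L_n \to \infty$. Reparametrize each $\gamma_n$ by $h$-arc length, so that $\gamma_n \colon [0,L_n] \to K$ is $h$-Lipschitz with constant $1$. Applying the limit curve Lemma to this equi-Lipschitz family with values in a compact set, a subsequence converges locally uniformly on $[0,\infty)$ to a continuous causal curve $\gamma \colon [0,\infty) \to K$ which is parametrized by $h$-arc length; in particular $\gamma$ has infinite $h$-arc length, and so by \cite[Cor.\ 2.1]{minguzzi17} it is inextendible. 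But its image lies in the compact set $K$, contradicting non-total imprisonment.

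For the ``if'' direction, assume the length bound and suppose, for contradiction, that some inextendible continuous causal curve $\sigma$ has image contained in a compact set $K$. Then the $h$-arc length of $\sigma$ is at most $c(K)$, hence finite. This contradicts \cite[Cor.\ 2.1]{minguzzi17}, which says that an inextendible continuous causal curve must have infinite $h$-arc length.

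The only non-routine step is the first direction, and more precisely the application of the limit curve Lemma: one must check that the hypotheses of that Lemma are met by the $h$-arc length parametrized sequence $\gamma_n \colon [0,L_n]\to K$ (uniform Lipschitz bound, values in a compact set, and $L_n\to\infty$), and that the resulting limit curve, defined on all of $[0,\infty)$, is still parametrized by $h$-arc length (hence of infinite $h$-length) and causal for the closed cone structure. Both facts are standard consequences of the closedness of $C$ and of the uniform convergence on compact subsets, but they are the pivot of the argument.
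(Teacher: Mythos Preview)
Your proposal is correct and matches the paper's approach: the paper does not give a detailed proof of this proposition but simply states that it is ``an easy application of the limit curve Lemma \cite[Lemma 2.1]{minguzzi17}'', which is exactly what you carry out, together with the inextendibility characterization \cite[Cor.\ 2.1]{minguzzi17}. One small remark: the intermediate claim that the limit curve is \emph{parametrized by $h$-arc length} is stronger than what you actually need and is not in general preserved under uniform convergence (think of null zigzags converging to a timelike segment); what the limit curve Lemma gives you directly is that the limit is a future-inextendible continuous causal curve, which is all the contradiction requires.
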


\subsection{Lorentzian length spaces}
Let us discuss Lorentzian (pre-)length spaces in the version by Kunzinger and S\"amann \cite{kunzinger18}. We shall not recall all definitions, referring to \cite{kunzinger18} for details. A pre-length space $(M, \rho, I,J, d)$ is a quintuple given by a Kronheimer and Penrose's causal space $(M,I,J)$, a metric $\rho$, and a lower semi-continuous Lorentzian distance $d$, satisfying some compatibility conditions.

Proposition \ref{biq} has an analog in the theory by Kunzinger and S\"amann.  The two properties
\begin{itemize}
\item[($\alpha$)] there is no  inextendible causal curve imprisoned in compact set, and
\item[($\beta$)] the causal curves with image in a compact set have bounded length, (the length is that induced from $\rho$)
\end{itemize}
are equivalent for locally causally closed $\rho$-compatible Lorentzian pre-length space  \cite[Lemma 3.12, Cor.\ 3.15]{kunzinger18} hence  for Lorentzian length spaces.

They defined non-total imprisonment in their framework as property $(\beta)$, thus it depends on $\rho$. 
 Actually, the distance $\rho$ also appears in ($\alpha$) as it is present in  the Lipschitz condition that they impose on  their causal curves \cite[Def.\ 2.18]{kunzinger18}. The distance $\rho$ seems to be essential for their definitions. We observe that on manifolds and over  compact subsets distances associated to Riemannian metrics are all Lipschitz equivalent, however, their Lorentzian length spaces $M$ are not manifolds.

This situation clearly complicates causality in the Kunzinger-S\"amann theory as global hyperbolicity is obtained from ($\beta$) by adding additional conditions. It seems that causality not only depends on the causal relation $J$ and on the topology, but also on the metric $\rho$.

They
 defined global hyperbolicity through the property\footnote{It was claimed in \cite{ake20} that `non-total imprisonment' can be weakened to `causality' since the proof of  \cite[Thm.\ 3.26(v)]{kunzinger18} would not use `non-total imprisonment'. Actually, this is not correct as it uses it in applying \cite[Thm.\ 3.14]{kunzinger18} and the uniform bound on lengths implied by non-total imprisonment \cite[Def.\ 2.35]{kunzinger18}. The fact that causality and compactness of causal diamonds  implies ($\beta$) is not proved in \cite{ake20}, where the authors keep using their Def.\ 3.1 of global hyperbolicity and hence property ($\beta$)  in their arguments.  Still their claim that the assumption in global hyperbolicity can be weakened to causality is  correct under minimal assumptions on the Lorentzian length space, see  our last Corollary \ref{jje}.}
\begin{itemize}
\item[(a)] property ($\beta$) and for every $p,q\in M$, the `causal diamonds' $J^+(p)\cap J^-(q)$ are compact,
\end{itemize}

Subsequently, in their paper on optimal transport over smooth Lorentzian manifolds, Mondino and Suhr \cite{mondino18} adopted the same definition
but realized than in order to develop their theory they needed a stronger property
\begin{itemize}
\item[(b)] property ($\beta$) and for every compact subsets $K_1,K_2$ the `causal emeralds' $J^+(K_1)\cap J^-(K_2)$ are compact.
\end{itemize}
which they called  `$\mathcal{K}$-global hyperbolicity'.

This terminology was adopted in Cavalletti and Modino work on optimal transport over  Lorentzian length spaces \cite{cavalletti20}, and in posterior works using the same framework \cite{braun22,braun23}.

We have the following result (this is stronger than
\cite[Lemma 1.5]{cavalletti20})

\begin{proposition} \label{vic}
For a Lorentzian pre-length space such that every point admits a timelike curve passing through it (e.g.\ localizable ones and hence Lorentzian length spaces \cite{kunzinger18}),  properties (a) and (b) actually coincide.
\end{proposition}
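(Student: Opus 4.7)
The direction (b) $\Rightarrow$ (a) is immediate by taking $K_1 = \{p\}$, $K_2 = \{q\}$. My plan for the nontrivial direction (a) $\Rightarrow$ (b) is to reduce compactness of a general causal emerald $J^+(K_1) \cap J^-(K_2)$ to compactness of finitely many causal diamonds, using the timelike curve hypothesis to cover $K_1$ and $K_2$ by open chronological diamonds, and then to handle the remaining closedness issue via a limit curve argument powered by property ($\beta$).

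First I would fix compact $K_1, K_2$. Through each $p \in M$ runs a timelike curve, producing points $a_p, b_p$ with $p \in I^+(a_p) \cap I^-(b_p)$, an open neighborhood of $p$. Compactness yields finite subcovers $\{I^+(a_i)\cap I^-(b_i)\}_{i=1}^{n}$ of $K_1$ and $\{I^+(c_j)\cap I^-(d_j)\}_{j=1}^{m}$ of $K_2$. Transitivity of $J$ then gives the inclusion
\[
J^+(K_1)\cap J^-(K_2) \;\subset\; \bigcup_{i,j} J^+(a_i)\cap J^-(d_j),
\]
whose right-hand side is a finite union of causal diamonds, hence compact by (a). It remains to show that $J^+(K_1)\cap J^-(K_2)$ is a closed subset of this compact envelope.

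For closedness I would first establish that $J$ itself is closed in the product topology; the standard result cited after Proposition \ref{biw} then gives closedness of $J^+(K_1)$ and $J^-(K_2)$, so that their intersection is compact as a closed subset of a compact Hausdorff set. To see $J$ is closed, take $(x_n, y_n)\to (x, y)$ with $x_n \le y_n$. Bracketing $x$ and $y$ by chronological neighborhoods derived from their timelike curves, I obtain fixed points $a, d$ such that eventually $a \le x_n$ and $y_n \le d$, so that causal curves realizing $x_n \le y_n$ are trapped in the compact diamond $J^+(a) \cap J^-(d)$. Property ($\beta$) then provides a uniform $\rho$-length bound on these curves, and the Kunzinger-S\"amann limit curve machinery (Arzel\`a-Ascoli on $\rho$-arc-length parametrizations) produces a limiting causal curve from $x$ to $y$, whence $x \le y$.

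The main obstacle is this limit curve step: it presupposes that the relation $x_n \le y_n$ is realized by an actual causal curve---automatic in the localizable and Lorentzian length space cases cited in the statement---and uses the appropriate limit curve lemma from the Kunzinger-S\"amann framework. The timelike curve hypothesis is doing double duty: once to cover compact sets by open chronological neighborhoods (so the finite cover argument produces a compact envelope), and once to bracket limit points within a compact causal diamond (so limit curves can actually be extracted).
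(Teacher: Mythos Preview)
Your reduction of $J^+(K_1)\cap J^-(K_2)$ to a closed subset of a finite union of causal diamonds is exactly the argument the paper cites from \cite[Prop.\ 2.3]{minguzzi19c} and \cite[Prop.\ 2.21]{minguzzi17}. The gap is in how you establish that $J$ is closed. Your limit curve argument needs (i) causal path connectedness, so that each $x_n\le y_n$ is realized by a causal curve, and (ii) a limit curve theorem, which in the Kunzinger--S\"amann framework requires local causal closedness for the limit to be causal. Neither is hypothesized in the proposition: only the existence of a timelike curve through each point is assumed. So your argument covers the parenthetical examples (localizable spaces, Lorentzian length spaces) but not the general statement. You flag (i) yourself as ``the main obstacle''; (ii) is a second hidden assumption of the same kind.

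The paper's route (``the usual way'', \cite[Thm.\ 4.12]{minguzzi18b}) is purely order-theoretic and sidesteps both issues. One first shows semi-closedness: if $s_n\in J^+(r)$ with $s_n\to s$, pick $s^+$ with $s\ll s^+$ from the timelike curve through $s$; then eventually $s_n\in J^+(r)\cap J^-(s^+)$, which is compact hence closed, so $s\in J^+(r)$. For full closedness, given $(p_n,q_n)\to(p,q)$ with $p_n\le q_n$, take $p^-_k\to p$ with $p^-_k\ll p$ along the timelike curve through $p$. For each fixed $k$, openness of $I^+(p^-_k)$ gives $p^-_k\le p_n\le q_n$ eventually, so $q_n\in J^+(p^-_k)$, and semi-closedness yields $p^-_k\le q$. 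Then $p^-_k\in J^-(q)$ for all $k$ with $p^-_k\to p$, and semi-closedness again gives $p\le q$. This uses neither $(\beta)$ nor causality, which is precisely the content of the remark immediately following the proposition---a feature your approach cannot recover, since it invokes $(\beta)$ essentially.
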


Actually, the proof does not use the non-total imprisoning property ($\beta$), nor causality.
It is worth recalling that localizable Lorentzian pre-length spaces are $\rho$-compatible \cite[Def.\ 3.18]{kunzinger18}.

\begin{proof}
The direction (b) $\Rightarrow$ (a) is clear. For the direction (a) $\Rightarrow$ (b) one first proves that $J$ is closed in the usual way \cite[Thm.\ 4.12]{minguzzi18b}. Then the  proof is word by word that given in \cite[Prop.\ 2.3]{minguzzi19c} or \cite[Prop.\ 2.21]{minguzzi17}.
\end{proof}

Let us denote with $\mathscr{T}$ the topology of the pre-length space, that is, that induced by $\rho$. We recall that a causally path connected Lorentzian pre-length space has the following property \cite[Lemma 3.3]{kunzinger18}: causality (defined by the antisymmetry of $J$) holds if and only if there are no closed causal curves.

\begin{theorem} \label{iwe}
For a Lorentzian pre-length space which is causally path connected  and $\rho$-compatible  (hence for Lorentzian length spaces) the following properties are equivalent
\begin{itemize}
\item $(M,\mathscr{T}, J)$ satisfies ($\star$) (i.e. global hyperbolicity)
\item property (b) (i.e.\ so called $\mathcal{K}$-global hyperbolicity).
\end{itemize}
In particular, the second property does not depend on $\rho$ (or $I$, or $d$) just on its induced topology.
\end{theorem}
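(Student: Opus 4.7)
The plan is to prove the two implications separately, leveraging the preparatory propositions. The direction (b) $\Rightarrow$ ($\star$) is the easier one and essentially reduces to applying Prop.\ \ref{bir}. The direction ($\star$) $\Rightarrow$ (b) is harder because it requires extracting the $\rho$-dependent non-total imprisonment property ($\beta$) from the purely order-topological hypothesis ($\star$). I expect the main obstacle to lie there.

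For (b) $\Rightarrow$ ($\star$): Setting $K_1=K_2=K$ in the emerald condition immediately yields compactness of every $J^+(K)\cap J^-(K)$. To show that $J$ is a closed order I split into closedness and antisymmetry. The topology $\mathscr{T}$ induced by the metric $\rho$ is Hausdorff and first countable, so Prop.\ \ref{bir} applies and compactness of causal emeralds gives ($\sharp$): $J$ is closed in $M\times M$. Antisymmetry follows from property ($\beta$) together with causal path connectedness: by \cite[Lemma 3.3]{kunzinger18} causality is equivalent to the absence of closed causal curves, and a closed causal curve with image $K$ could be concatenated with itself $n$ times to produce causal curves inside the compact set $K$ of $\rho$-length $nL \to \infty$, contradicting ($\beta$). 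Therefore $J$ is a closed order, which together with the compactness of $J^+(K)\cap J^-(K)$ is exactly ($\star$).

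For ($\star$) $\Rightarrow$ (b): Compactness of causal emeralds is Prop.\ \ref{biw}. The remaining task is to derive property ($\beta$). By \cite[Lemma 3.12, Cor.\ 3.15]{kunzinger18}, ($\beta$) is equivalent to non-total imprisonment ($\alpha$) for locally causally closed $\rho$-compatible Lorentzian pre-length spaces; since $J$ is globally closed under ($\star$) and the space is $\rho$-compatible by assumption, local causal closedness is automatic, so it suffices to prove ($\alpha$). I argue by contradiction: suppose $\gamma$ is an inextendible causal curve with image contained in a compact set $K$. Then $\gamma\subset J^+(K)\cap J^-(K)$, which is compact by ($\star$). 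Choosing a sequence of points along $\gamma$ escaping to its endpoints and applying the limit curve lemma of the Kunzinger–S\"amann framework inside this compact set, I obtain a limit causal curve. Combining closedness of $J$ with the compactness of causal diamonds, a standard causality-theoretic argument (identical to the one used to pass from causality plus compactness of diamonds to non-total imprisonment in the closed cone structure setting, cf.\ the discussion preceding Corollary \ref{jje} in the footnote) produces two distinct points $p\le q\le p$ on the limit curve, contradicting the antisymmetry of $J$.

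The delicate point, and the main obstacle, is the last step: one must verify that the limit curve machinery available for Lorentzian pre-length spaces really gives the closed causal loop, since unlike the closed cone structure case one does not have a tangent cone formulation. This is precisely why causal path connectedness and $\rho$-compatibility enter — they guarantee that the limit curve lemma of \cite{kunzinger18} applies and that the resulting limit is itself a causal curve in the KS sense, so that the order-theoretic contradiction with closedness of $J$ can be drawn.
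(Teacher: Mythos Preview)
Your proof is correct and follows the same overall strategy as the paper. For (b) $\Rightarrow$ ($\star$) the arguments coincide. For ($\star$) $\Rightarrow$ (b), your reduction of ($\beta$) to ($\alpha$) via the Kunzinger--S\"amann equivalence \cite[Lemma 3.12, Cor.\ 3.15]{kunzinger18} is legitimate and in fact slightly more economical than the paper, which instead negates ($\beta$) directly and reconstructs the inextendible imprisoned curve by Arzel\`a--Ascoli (thereby reproving one half of that equivalence on the spot).

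Where you take an unnecessary detour is the subsequent invocation of the limit curve lemma. Once you \emph{already possess} an inextendible causal curve $\gamma$ imprisoned in a compact set, no further limit construction is needed. The paper's argument at that stage is purely order-theoretic: by \cite[Lemma 3.12]{kunzinger18} such a $\gamma$ cannot accumulate at a single point, so there are two distinct accumulation points $p\neq q$; choosing interleaved parameter sequences $s_k<t_k<s_{k+1}\to\infty$ with $\gamma(s_k)\to p$, $\gamma(t_k)\to q$ gives $\gamma(s_k)\le\gamma(t_k)\le\gamma(s_{k+1})$, hence $(p,q),(q,p)\in\bar J=J$ by closedness, contradicting antisymmetry. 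This is precisely the ``standard causality-theoretic argument'' you allude to, and it uses only the closedness of $J$---the limit curve theorem plays no role here, so the worry you raise in your final paragraph about whether the limit curve machinery of \cite{kunzinger18} applies simply dissolves.
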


\begin{proof}
Assume ($\star$). By Thm.\ \ref{ppr} for every compact sets $K_1,K_2$, $J^+(K_1)\cap J^-(K_2)$ is compact.
Suppose that ($\beta$) does not hold then we can find a compact set $C$ and causal curves $\gamma_k: [0,L_k]\to C$ parametrized with $\rho$-arc length that are thus 1-Lipschitz and so equi-Lipschitz. By the Arzel\'a-Ascoli theorem a subsequence converges uniformly on compact subsets to a 1-Lipschitz curve $\gamma:[0,\infty) \to C$. Since $J$ is closed (hence locally causally closed), $\gamma$ is causal. By $\rho$-compatibility $\gamma$ is inextendible (the proof goes as in the last paragraph of the proof of \cite[Thm.\ 3.14]{kunzinger18}).  The inextendible curve $\gamma: [0,\infty)\to C$ imprisoned in a compact set $C$ cannot accumulate on just one point \cite[Lemma 3.12]{kunzinger18}, thus there must be sequences $s_k, t_k \to \infty$ such that $\lim_k \gamma(s_k)=p$, $\lim_k \gamma(t_k)=q$, $p,q\in C$, $p\ne q$.  Passing to subsequences if necessary, we can assume that $s_k<t_k < s_{k+1}$, thus $\gamma(s_k)\le \gamma(t_k)\le \gamma(s_{k+1})$, and taking the limit $(p,q), (q,p)\in \bar{J}$.
 But by  ($\star$) $J$ is closed, thus antisymmetry is violated, a contradiction. \\


Conversely, assume (b), then by ($\beta$) there are no closed causal curves (as their image is compact)  hence $J$ is antisymmetric. By Thm.\ \ref{ppr} ($\star$) follows.
\end{proof}

The following result shows that under minimal conditions on the Lorentzian length space we can rescue a certain claim on the equivalence of two definitions of global hyperbolicity stated in \cite[Sec.\ 3]{ake20}.

\begin{corollary} \label{jje}
For a Lorentzian pre-length space which is causally path connected, $\rho$-compatible (e.g.\ a Lorentzian length space) and such that through each point passes a timelike curve, the properties (global hyperbolicity) (a), (b), and ($\star$), are all equivalent, and they are also equivalent to: causality and the causal diamonds are compact.
\end{corollary}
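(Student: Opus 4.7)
The plan is to reduce the corollary to an already-proven equivalence. From Theorem \ref{iwe} we know that ($\star$) and (b) are equivalent for causally path connected, $\rho$-compatible pre-length spaces, and from Proposition \ref{vic} we know (a) and (b) coincide once every point lies on a timelike curve. So it suffices to show that, under the corollary's hypotheses, the fourth property -- causality together with compactness of every causal diamond $J^+(p)\cap J^-(q)$ -- is equivalent to ($\star$).

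The forward implication ($\star$) $\Rightarrow$ (causality $+$ compact diamonds) is immediate: antisymmetry of the order $J$ is exactly causality, and for any $p,q$, the set $J^+(p)\cap J^-(q)$ is a closed subset (since $J$ is closed, $J^+(p)$ and $J^-(q)$ are closed) of the compact set $J^+(K)\cap J^-(K)$ with $K=\{p,q\}$, hence compact.

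For the converse I would chain the auxiliary results of Section 3. First, as remarked immediately after Proposition \ref{vic}, its proof uses neither ($\beta$) nor causality, but only the existence of a timelike curve through each point; so compactness of causal diamonds upgrades to compactness of causal emeralds $J^+(K_1)\cap J^-(K_2)$ for all compact $K_1,K_2$. Second, because the topology is the one induced by the metric $\rho$, it is Hausdorff and first countable, so Proposition \ref{bir} applies and forces $J$ to be closed in the product topology; combined with the assumed antisymmetry this makes $J$ a closed order. Finally, setting $K_1=K_2=K$ in the emerald statement delivers the compactness clause of ($\star$), and we are done.

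The main subtlety is the very first step of the converse -- that Proposition \ref{vic}'s proof genuinely does not invoke ($\beta$) or causality -- but this is asserted explicitly in the comment following that proposition, so it can be cited without reworking the argument. Everything else is a routine bookkeeping of earlier propositions.
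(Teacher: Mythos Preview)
Your proof is correct and follows essentially the same route as the paper's: both reduce the equivalence of (a), (b), ($\star$) to Prop.~\ref{vic} and Thm.~\ref{iwe}, and both handle the converse of the fourth equivalence by invoking the argument behind Prop.~\ref{vic} (which the paper cites directly as \cite[Thm.~4.12]{minguzzi18b} and \cite[Prop.~2.3]{minguzzi19c}). Your additional appeal to Prop.~\ref{bir} for the closedness of $J$ is harmless but redundant, since the proof of Prop.~\ref{vic} that you invoke for emerald compactness already establishes $J$ closed as its first step.
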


\begin{proof}
The first statement follows from Prop.\ \ref{vic} and Thm.\ \ref{iwe}. The last property is clearly implied by ($\star$). For the converse, assume causality and that the causal diamonds are compact.  First  $J$ is closed by the usual argument \cite[Thm.\ 4.12]{minguzzi18b}, hence $J$ is a closed order, and again by the argument in \cite[Prop.\ 2.3]{minguzzi19c} ($\star$) holds.
\end{proof}

\section{Conclusions}

We recalled the definition of global hyperbolicity for topological ordered spaces, the equivalent formulations for closed/proper cone structures, and explored some variations in the broader framework of topological preordered spaces (Thm.\ \ref{ppr}).

When it comes to Lorentzian (pre-)length spaces a la Kunzinger-S\"amann $(M, \rho, I,J, d)$, in my opinion, the terminology for property (b) introduced in the literature ($\mathcal{K}$-global hyperbolicity) should be rectified, as (b) could  be simply called `global hyperbolicity' (and (a) something like weak global hyperbolicity), as Lorentzian length spaces are special types of topological ordered spaces.
As shown in Thm.\ \ref{iwe}, property (b) coincides with the definition of global hyperbolicity previously introduced for these more general type of structures and as such it is, quite interestingly, independent of $\rho$, $I$ and $d$.
%
%
%
%
%
%
%
%


\section*{Acknowledgments}
I thank Clemens S\"amann for  useful suggestions.


\begin{thebibliography}{10}

\bibitem{beem96}
J.~K. Beem, P.~E. Ehrlich, and K.~L. Easley.
\newblock {\em Global Lorentzian Geometry}.
\newblock Marcel {D}ekker {I}nc., New York, 1996.

\bibitem{minguzzi11e}
J.~J. {Benavides Navarro} and E.~Minguzzi.
\newblock Global hyperbolicity is stable in the interval topology.
\newblock {\em J. Math. Phys.}, 52:112504, 2011.
\newblock {arXiv}:1108.5120.

\bibitem{bernal06b}
A.~N. Bernal and M.~{S\'a}nchez.
\newblock Globally hyperbolic spacetimes can be defined as {`causal'} instead
  of {`strongly causal'}.
\newblock {\em Class. Quantum Grav.}, 24:745--749, 2007.

\bibitem{bernard18}
P.~Bernard and S.~Suhr.
\newblock Lyapounov functions of closed cone fields: from {C}onley theory to
  time functions.
\newblock {\em Commun. Math. Phys.}, 359:467--498, 2018.

\bibitem{braun22}
M.~Braun.
\newblock R{\'e}nyi's entropy on {L}orentzian spaces. {T}imelike
  curvature-dimension conditions.
\newblock {arXiv}:2206.13005v3, 2022.

\bibitem{braun23}
M.~Braun.
\newblock Good geodesics satisfying the timelike curvature-dimension condition.
\newblock {\em Nonlinear Analysis}, 229:113205, 2023.

\bibitem{cavalletti20}
F.~Cavalletti and A.~Mondino.
\newblock Optimal transport in {L}orentzian synthetic spaces, synthetic
  timelike {R}icci curvature lower bounds and applications.
\newblock {arXiv}:2004.08934, 2020.

\bibitem{chrusciel12}
P.~T. Chru{\'s}ciel and J.~D.~E. Grant.
\newblock On {L}orentzian causality with continuous metrics.
\newblock {\em Class. Quantum Grav.}, 29:{145001}, 2012.

\bibitem{fathi15}
A.~Fathi.
\newblock Time functions revisited.
\newblock {\em Int. {J}. {G}eom. {M}eth. {M}od. {P}hys.}, 12:1560027, 2015.

\bibitem{fathi12}
A.~Fathi and A.~Siconolfi.
\newblock On smooth time functions.
\newblock {\em Math. {P}roc. {C}amb. {P}hil. {S}oc.}, 152:303--339, 2012.

\bibitem{grant20}
J.~D.~E. Grant, M.~Kunzinger, C.~S{\"a}mann, and R.~Steinbauer.
\newblock The future is not always open.
\newblock {\em Lett. Math. Phys.}, 110:83--103, 2020.
\newblock {arXiv}:1901.07996.

\bibitem{ake20}
L.~Ake Hau, A.~J.~Cabrera Pacheco, and D.~A. Solis.
\newblock On the causal hierarchy of {L}orentzian length spaces.
\newblock {\em Class. Quantum Grav.}, 37:215013, 2020.

\bibitem{hawking73}
S.~W. Hawking and G.~F.~R. Ellis.
\newblock {\em The Large Scale Structure of Space-Time}.
\newblock Cambridge {U}niversity {P}ress, Cambridge, 1973.

\bibitem{minguzzi19c}
R.~A. Hounnonkpe and E.~Minguzzi.
\newblock Globally hyperbolic spacetimes can be defined without the `causal'
  condition.
\newblock {\em Class. Quantum Grav.}, 36:197001, 2019.
\newblock {arXiv}:1908.11701.

\bibitem{kunzinger18}
M.~Kunzinger and C.~S{\"a}mann.
\newblock Lorentzian length spaces.
\newblock {\em Ann. Global Anal. Geom.}, 54:399--447, 2018.

\bibitem{minguzzi08e}
E.~Minguzzi.
\newblock Characterization of some causality conditions through the continuity
  of the {L}orentzian distance.
\newblock {\em J. Geom. Phys.}, 59:827--833, 2009.
\newblock {arXiv}:0810.1879.

\bibitem{minguzzi09c}
E.~Minguzzi.
\newblock Time functions as utilities.
\newblock {\em Commun. Math. Phys.}, 298:855--868, 2010.
\newblock {arXiv}:0909.0890.

\bibitem{minguzzi12d}
E.~Minguzzi.
\newblock Convexity and quasi-uniformizability of closed preordered spaces.
\newblock {\em Topol. Appl.}, 160:965--978, 2013.
\newblock {arXiv}:1212.3776.

\bibitem{minguzzi11f}
E.~Minguzzi.
\newblock Normally preordered spaces and utilities.
\newblock {\em Order}, 30:137--150, 2013.
\newblock {arXiv}:1106.4457v2.

\bibitem{minguzzi13e}
E.~Minguzzi.
\newblock Topological ordered spaces as a foundation for a quantum spacetime
  theory.
\newblock {\em J. of Phys.: Conf. Ser.}, 442:012034, 2013.
\newblock {C}ontribution to the proceedings of the conference 'DICE12,
  Space-Time-Matter-Quantum Mechanics, from the Planck scale to emergent
  phenomena', {C}astello {P}asquini, {C}astiglioncello (Italy) September 17 -
  21, 2012.

\bibitem{minguzzi17}
E.~Minguzzi.
\newblock Causality theory for closed cone structures with applications.
\newblock {\em Rev. Math. Phys.}, 31:1930001, 2019.
\newblock {arXiv}:1709.06494.

\bibitem{minguzzi18b}
E.~Minguzzi.
\newblock Lorentzian causality theory.
\newblock {\em Living Rev. Relativ.}, 22:3, 2019.

\bibitem{mondino18}
A.~Mondino and S.~Suhr.
\newblock An optimal transport formulation of the {E}instein equations of
  general relativity.
\newblock In press. arXiv:1810.13309, 2018.

\bibitem{nachbin65}
L.~Nachbin.
\newblock {\em Topology and order}.
\newblock D.\ {V}an {N}ostrand {C}ompany, {I}nc., Princeton, 1965.

\bibitem{samann16}
C.~S{\"a}mann.
\newblock Global hyperbolicity for spacetimes with continuous metrics.
\newblock {\em Ann. Henri Poincar{\'e}}, 17(6):1429--1455, 2016.

\bibitem{sbierski15}
J.~Sbierski.
\newblock The {$C^0$}-inextendibility of the {S}chwarzschild spacetime and the
  spacelike diameter in {L}orentzian geometry.
\newblock {\em J. Diff. Geom.}, 108:319--378, 2018.

\end{thebibliography}

\end{document}